\definecolor{myblue}{rgb}{0.2,0.2,0.8}
\definecolor{myblack}{rgb}{0,0,0}
\definecolor{myurl}{rgb}{0.1,0.1,0.4}
\edef\restoreparindent{\parindent=\the\parindent\relax}
\newtheorem{theorem}{Theorem}[section] 
\newtheorem{corollary}[theorem]{Corollary}
\newtheorem{definition}[theorem]{Definition}
\newtheorem{remark}[theorem]{Remark}
\begin{document}

\title{Incompatibility of effects in general probabilistic models}
\author{Roberto Beneduci}
\affiliation{Dipartimento di Fisica, Universit\`a della Calabria, Italy}
\affiliation{INFN gruppo collegato  Cosenza, Italy}
\author{Leon  Loveridge} \thanks{Paul Busch was an original coauthor of this paper, with Beneduci. Later, Neil Stevens also joined as a coauthor. This, then unpublished note, has since been referred to in, e.g., \cite{stevens2014steering} and \cite{Busch2016a}. The work of Stevens has since been removed, and Loveridge has made some additions and modifications. Since we believe Paul would have wanted to see a final version, we have opted instead to acknowledge his contribution here.}
\affiliation{Quantum Technology Group, Department of Science and Industry Systems, University of South-Eastern Norway, 3616 Kongsberg, Norway}


\begin{abstract}
We give a necessary and sufficient condition for the incompatibility of a pair of effects in a general probabilistic model in which the state space is a total convex space, which can be obtained by minimising a real parameter. This has an interpretation in terms of the least noise that must be included to make the given pair compatible.
\end{abstract}

\maketitle

\begin{center}
    {\bf Dedicated to Stan Gudder on the occasion of his 85'th birthday.}
\end{center}

\section{Introduction}
In \cite{Wolf}, it was shown that in finite dimensional quantum mechanics, a pair of effect operators $E$ and $F$ are incompatible (alternatively, not jointly measurable) if and only if they enable the violation of the CHSH inequality. The proof is based on the following result: two effects $E$ and $F$ in a finite dimensional Hilbert space $\mathcal{H}$ are incompatible if and only if there is an effect $S$ and a number $\lambda_0>1$, such that $E+F\leq\lambda_0 I+S$. In this paper we show that in the class of general probabilistic models in which the state space is a total convex space, which is considerably more general than the finite spaces often treated in the literature, the incompatibility of a pair of effects can again be cast in terms of the value of a real parameter being greater than one.

The idea of a general probabilistic model stems from the operational, or convex, approach to quantum mechanics \cite{Ludwig,Davies,Holevo,Gudder,Busch,Beltrametti1}. A major aim of this programme is to recover the Hilbert space formulation of quantum mechanics as a particular instance of a more general model. However, it is an active research field in its own right, and has dramatically improved our understanding of key differences between classical physics, quantum physics, and yet more general theories which are neither classical nor quantum (e.g., \cite{Barnum1}). Key insights have come from the stimulus provided by quantum information theory, in which the differing information processing capabilities can be analysed (e.g., \cite{barrett2007information, Barnum}); by now a vast range of phenomena have been investigated (see e.g. \cite{plavala2021general} for a recent introduction).

 There are now numerous investigations of incompatibility in general probabilistic models, which also shed light on the above questions; for instance it is now known that the state space being a simplex is equivalent to all observables being jointly measurable \cite{plavala2016all,kuramochi2020compatibility}, which singles out classical theory as the only theory without incompatibility. The relationship between incompatibility and non-locality has also been developed \cite{jenvcova2018incompatible}, though a direct analogue of the result in \cite{Wolf} remains open.
 

The generalisation of \cite{Wolf} which is presented below characterises the incompatibility of two effects in terms of the value of a real parameter $\lambda_0>1$, realising a key step in understanding the connection to CHSH in this most general setting. The result relies on the observation that the infimum over the set of all $\lambda$ for which a pair of effects are incompatible is in fact a minimum (Theorem 3.4), which is crucial for Corollary 3.5 in which the incompatibility is fully characterised in terms of $\lambda_0$. We show, moreover, that this parameter may be interpreted in terms of the least noise that must be added in order to make the effects compatible.

\section{General probabilistic models}

A general probabilistic model is defined by its set of states, which is considered to be primitive and is described by a convex set $\Omega$.  Convexity is crucial since it assures that a probabilistic mixture of states is still a state. The observables are derived objects, defined as affine maps $F:\Omega\mapsto M_1^+(X)$ from the state space, $\Omega$,  to the space of probability measures on a space $X$ of possible outcomes in a measurement of  $F$. In this section we briefly recall the main ideas in the operational approach, along with a theorem showing how to construct from $\Omega$ a base norm space and, by duality, an order-unit space.

\begin{definition}
The state space $\Omega$ of a general probabilistic model is a convex subset of a real linear vector space $V$. 
\end{definition}
\begin{definition}
An affine functional is a map $f:\Omega\to \mathbb{R}$ such that 
$$f(\lambda_1\omega_1+\dots,\lambda_n\omega_n)=\sum_{i=1}^n\lambda_i\,f(\omega_i),$$
\noindent
 for any finite set of positive numbers, $\lambda_1,\dots,\lambda_n$, $\sum_i\lambda_i=1$. The unit functional $u$ is the functional such that $u(\omega)=1$ for all $\omega\in\Omega$.
\end{definition}
\begin{definition}
An effect is an affine functional $f$ such that $0\leq f(\omega)\leq 1$ for any $\omega\in \Omega$. The space of effects is denoted by $\mathcal{E}(\Omega)$.
\end{definition}
\noindent
In the following, $A^b(\Omega)$ denotes the space of bounded affine functionals.   The order in $A^b(\Omega)$ is given by: $f\leq g$ if and only if $(g-f)(\omega)\geq 0$ for each $\omega\in \Omega$. The positive cone in $A^b(\Omega)$ is the set $C^+=\{f\in A^b(\Omega),\,f(\omega)\geq 0,\,\forall \omega\in \Omega\}$.
\begin{definition}
A convex space $\Omega$ is total convex if $f(\omega_1)=f(\omega_2)$, for any effect $f\in\mathcal{E}(\Omega)$, implies $\omega_1=\omega_2$.
\end{definition}
\noindent
In other words, a total convex space $\Omega$ is such that its elements are separated by the space of effects. Under this hypothesis, the following theorem can be proven.
\begin{theorem}\label{teo1}\cite{Beltrametti}
Let $\Omega$ be a total convex space. Then, the span of $\Omega$, $V(\Omega)$, is a base norm space and $A^b(\Omega)$ is an order unit Banach space.
\end{theorem}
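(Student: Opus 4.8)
The plan is to realise $\Omega$ concretely inside the dual of $A^b(\Omega)$, where both the base-norm structure on $V(\Omega)$ and the order-unit structure on $A^b(\Omega)$ become transparent, and then to verify the defining axioms, invoking totality at the single point where it is genuinely needed. First I would record that totality says precisely that the space of effects separates the points of $\Omega$, and hence so does $A^b(\Omega)=\mathrm{span}\,\mathcal{E}(\Omega)$: any bounded affine $f$ with $m=\inf_\Omega f\le M=\sup_\Omega f$ is either $mu$ (if $M=m$) or $mu+(M-m)g$ with $g=(f-mu)/(M-m)\in\mathcal{E}(\Omega)$. Equivalently, the affine evaluation map $\iota:\Omega\to A^b(\Omega)^\ast$, $\iota(\omega)(f)=f(\omega)$, is injective. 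I then take $V(\Omega)$ to be the linear span of $\iota(\Omega)$ and set $C=\mathbb{R}_{\ge 0}\,\iota(\Omega)$.

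For the base-norm assertion the steps are: (i) $C$ is a convex cone closed under addition, since a positive combination $\lambda_1\iota(\omega_1)+\lambda_2\iota(\omega_2)$ with $s=\lambda_1+\lambda_2>0$ renormalises, by affineness of $\iota$ and convexity of $\Omega$, to $s\,\iota\!\big(\tfrac{\lambda_1}{s}\omega_1+\tfrac{\lambda_2}{s}\omega_2\big)\in C$; grouping coefficients by sign then gives $V(\Omega)=C-C$, with every element of $C$ of the form $\lambda\iota(\omega)$. (ii) The linear functional $\tilde u:\phi\mapsto\phi(u)$ equals $1$ on $\iota(\Omega)$, whence $C\cap(-C)=\{0\}$ and $\iota(\Omega)=C\cap\{\tilde u=1\}$ is a base of $C$, the representation $v=\lambda\iota(\omega)$ being unique by applying $\tilde u$. (iii) Let $\|\cdot\|$ be the Minkowski functional of $\mathrm{conv}(\iota(\Omega)\cup-\iota(\Omega))$; by (i) this set is absorbing, convex and symmetric, so $\|\cdot\|$ is a seminorm, additive on $C$ with $\|v\|=\tilde u(v)$ there. (iv) It is in fact a norm: if $\|v\|=0$ then $|f(v)|\le\|v\|=0$ for every effect $f$ (because $|f|\le 1$ on $\iota(\Omega)\cup-\iota(\Omega)$); writing $v=\lambda\iota(\omega_1)-\mu\iota(\omega_2)$ and testing against $u$ forces $\lambda=\mu$, hence $f(\omega_1)=f(\omega_2)$ for all effects $f$, so $\omega_1=\omega_2$ by totality and $v=0$. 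Step (iv) is the crux and the only place totality is used.

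For the order-unit assertion I would give $A^b(\Omega)$ the pointwise order with cone $C^+$ and the constant functional $u$. That $u$ is an order unit and that the order is Archimedean are immediate from evaluation at points of $\Omega$: $-\|f\|_\infty u\le f\le\|f\|_\infty u$ for bounded affine $f$, and $f\le\varepsilon u$ for all $\varepsilon>0$ gives $f(\omega)\le 0$ for all $\omega$, i.e.\ $f\le 0$. The associated order-unit norm $\inf\{\lambda>0:-\lambda u\le f\le\lambda u\}$ then coincides with the supremum norm $\|f\|_\infty=\sup_{\omega\in\Omega}|f(\omega)|$. Completeness, which upgrades this to an order-unit \emph{Banach} space, follows because $A^b(\Omega)$ is a uniformly closed subspace of the Banach space $\ell^\infty(\Omega)$ of bounded functions on the set $\Omega$: a uniform limit of bounded affine functionals is again bounded and affine.

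I expect no serious difficulty beyond step (iv): the rest is routine verification of the axioms of a base-norm space and of an Archimedean order-unit space. The points requiring a little care are the equivalence of totality with injectivity of $\iota$ (for which the observation $A^b(\Omega)=\mathrm{span}\,\mathcal{E}(\Omega)$ is convenient) and the identification of the abstract order-unit norm on $A^b(\Omega)$ with the concrete sup-norm, which is what lets completeness be inherited from $\ell^\infty(\Omega)$. Under this identification one also sees that $A^b(\Omega)$ is dual, as an ordered normed space, to the base-norm space $V(\Omega)$, so that the ``by duality'' in the surrounding discussion is literal.
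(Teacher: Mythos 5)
The paper does not prove this statement---it is imported from the cited reference---so there is no internal proof to compare against; your argument is a correct, self-contained reconstruction along the standard lines, and it is consistent with the paper's Remark~2.6 (order unit $=u$, order-unit norm $=\|\cdot\|_\infty$, unit ball of the dual $=$ polar of $\mathrm{conv}(\Omega\cup-\Omega)$). All the key steps check out: the decomposition $f=mu+(M-m)g$ showing $A^b(\Omega)=\mathrm{span}\,\mathcal{E}(\Omega)$, the identification of the Minkowski functional of $\mathrm{conv}(\iota(\Omega)\cup-\iota(\Omega))$ with $\tilde u$ on the cone, the use of totality exactly once to kill the kernel of that seminorm, and completeness of $A^b(\Omega)$ as a uniformly closed subspace of $\ell^\infty(\Omega)$ together with the identity of the order-unit norm and the sup-norm.

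One point where your route is actually \emph{more} careful than the theorem's literal wording: you take $V(\Omega)$ to be the span of $\iota(\Omega)$ inside $A^b(\Omega)^*$ rather than the span of $\Omega$ inside the ambient space $V$ of Definition~2.1. This matters, because for an arbitrary convex $\Omega\subset V$ (e.g.\ one whose affine hull contains the origin) $\Omega$ need not be a base of the cone it generates in $V$; the evaluation embedding fixes this automatically since $\tilde u$ is the required strictly positive functional. That is the construction intended in the cited source, and your proof makes explicit why it is the right one. No gaps.
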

Theorem \ref{teo1} ensures that it is possible to give a topological structure to the space of states $\Omega$, which is a subset of the norm space $V(\Omega)$. Moreover, the theorem implies that the space of effects is a subset of a Banach space, pointing to the connection between the convex approach and the Hilbert space formulation of quantum mechanics. 
\begin{remark}
We recall that $\Omega$ is the base for the positive cone $V(\Omega)^+$ and that each linear functional in the topological dual of $V(\Omega)$, ${V(\Omega)}^*$, is the linear extension of an affine functional in $A^b(\Omega)$. Since (see \cite{Beltrametti}) the order unit of ${V(\Omega)}^*$ is the linear functional on $V(\Omega)$ which takes the value $1$ on every element of $\Omega$, the order unit in $A^b(\Omega)$ is the unit function $u$ of $A^b(\Omega)$. Moreover \cite{Beltrametti}, the order unit norm $\|f\|$ of $f\in {V(\Omega)}^*$ is $\sup\{|f(\omega)|,\,\omega\in Conv(\Omega\cup-\Omega\}=\sup\{|f(\omega)|,\,\omega\in\Omega\}$, which coincides with the sup-norm $\|\cdot\|_{\infty}$ on $A^b(\Omega)$. Finally, there is a canonical embedding $J:\Omega\to{A^b(\Omega)}^*$ (where ${A^b(\Omega)}^*$ is the topological dual of ${A^b(\Omega)}$) which is given by $J(\omega) f=f(\omega)$ for all $\omega\in\Omega$, $f\in A^b(\Omega)$.
\end{remark}

We can introduce the weak*-topology  on $A^b(\Omega)$ as follows: a base of neighborhoods of $f$ is given by the sets 
\begin{equation}\label{weak1}
N(f;\,\omega_1,\dots,\omega_n;\epsilon)=\{g\in A^b(\Omega)\,;\,|f(\omega_i)-g(\omega_i)|\leq\epsilon, \,i=1,\dots,n\}
\end{equation}
\noindent
where, $\omega_1,\dots,\omega_n\in \Omega$, $\epsilon>0$.
In that topology, a sequence of functionals $f_n$ converges to a functional $f$ if $\lim_{n\to\infty}f_n(\omega)=f(\omega)$ for each $\omega\in \Omega$ (point-wise convergence). 

It is worth remarking that the weak*-topology on $A^b(\Omega)$ is induced by the weak*-topology on ${V(\Omega)}^*$, $\sigma({V(\Omega)}^*,V(\Omega))$. Indeed, a base of neighborhoods for $f\in{V(\Omega)}^*$ in the weak*-topology $\sigma({V(\Omega)}^*,V(\Omega))$ is given by
\begin{equation}\label{weak2}
N(f;\,\omega_1,\dots,\omega_n;\epsilon)=\{g\in {V(\Omega)}^*\,:\,|f(\omega_i)-g(\omega_i)|\leq\epsilon,\, i=1,\dots,n\}
\end{equation}
where, $\omega_1,\dots,\omega_n\in V(\Omega)$, $\epsilon>0$. 
\begin{remark}\label{compactness}
The Alaoglu-Banach-Bourbaki theorem assures the compactness of the unit ball $({V(\Omega)}^*)_1=\{f\in {V(\Omega)}^*:\sup_{\omega\in\Omega}|f(\omega)|\leq1\}$ in the weak*-topology (\ref{weak2})  on ${V(\Omega)}^*$. (See \cite{Conway}, page 130). On the other hand (see equations (\ref{weak1}) and (\ref{weak2})), the compactness of $({V(\Omega)}^*)_1$ implies the compactness of the unit ball $(A^b(\Omega))_1=\{f\in A^b(\Omega)\,:\,\|f(\omega)\|_\infty\leq1\}$ in the weak*-topology (\ref{weak1}) on $A^b(\Omega)$. 
\end{remark}

\section{Generalized observables and joint measurability in a general probabilistic model}

\begin{definition}
In a general probabilistic model an observable (measurement) is a map $F:\mathcal{B}(X)\to\mathcal{E}(\Omega)$ where $\mathcal{B}(X)$ is the Borel $\sigma$-algebra of a topological set $X$ and 
   \begin{equation*}
    F\big(\bigcup_{n=1}^{\infty}\Delta_n\big)(\omega)=\sum_{n=1}^{\infty}F(\Delta_n)(\omega),\quad\forall\omega\in\Omega
    \end{equation*}
    \noindent 
 with $\{\Delta_n\}$ a countable family of disjoint sets in $\mathcal{B}(X)$. Moreover, $F$ is normalized, i.e., $F(X)=u$.
\end{definition}
 The set $X$ is interpreted as the set of values that the measurement of the observable described by $F$ can assume. For each $\omega\in\Omega$, $F(\cdot)(\omega)$ is a probability measure on $X$ and  $F(\Delta)(\omega)$ is interpreted as the probability that a measurement of $F$ gives a results in $\Delta$.  Therefore, an observable can also be defined as a map from the state space $\Omega$ to the space of probability measures on $X$, $M_1^+(X)$.
\begin{definition}
If the set $X$ is countable, an observable $F$ is denoted by $F=\{F_x\}_{x\in X}$ and is called discrete. 
\end{definition}
In the case of a discrete observable $F$, the value $F_x(\omega)$ is interpreted as the probability that the result of a measurement of $F$ in the state $\omega$ is $x$.   The simplest nontrivial observable is the dichotomic observable $F=\{f,u-f\}$ which is uniquely defined by the effect $f$. 
\begin{definition}
Two observables $F_1:\mathcal{B}(X_1)\to\mathcal{E}(\Omega)$ and $F_2:\mathcal{B}(X_2)\to\mathcal{E}(\Omega)$ are jointly measurable if there is a third observable $F:\mathcal{B}(X_1\times X_2)\to\mathcal{E}(\Omega)$ such that $F(\Delta_1\times X_2)=F_1(\Delta_1)$ and $F(X_1\times\Delta_2)=F_2(\Delta_2)$.  
\end{definition}
In the case of two dichotomic observables $\{e,u-e\}$ and $\{f,u-f\}$ , the general condition for joint measurability can be cast in the following form (see also \cite{stevens2014steering}):  $\{e,u-e\}$ and $\{f,u-f\}$ are jointly measurable if there is an effect $g$ such that 
\begin{align}\label{joint}
g&\leq e\\\nonumber
g&\leq f\\
e+f&\leq g+u\nonumber
\end{align} 
Since a dichotomic observable is uniquely defined by a single effect, the joint measurability condition (\ref{joint}) can be viewed as a definition for the joint measurability of two effects $e$ and $f$. 

We may now prove the following theorem which generalizes to the infinite dimensional Hilbert space setting some of the results in \cite{Wolf}, and holds also in the more abstract class of general probabilistic models introduced above, to give a necessary and sufficient condition for the incompatibility of two effects. (See Corollary \ref{incompatibility} below).

\begin{theorem}\label{min}
Let $e$ and $f$ be two effects and consider the set $\Lambda:=\{\lambda\in\mathbb{R}\,\vert\,\,\,\exists\, g\leq e,\,f\,;\,g+\lambda u \geq e+f\}$. Then, $\lambda_0:=\inf\Lambda\in\Lambda$.
\end{theorem}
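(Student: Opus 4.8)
The plan is to pick a sequence in $\Lambda$ decreasing to $\lambda_0$, choose a witnessing functional for each term, extract a weak*-convergent subnet of these witnesses using the compactness recorded in Remark~\ref{compactness}, and check that the weak*-limit witnesses $\lambda_0\in\Lambda$. Throughout, I say ``$\lambda\in\Lambda$ with witness $g$'' to mean that $g\in A^b(\Omega)$ satisfies $g\le e$, $g\le f$ and $g+\lambda u\ge e+f$.

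First I would confirm that $\lambda_0$ is a well-defined real number. The functional $g=0$ satisfies $0\le e$, $0\le f$ and $e+f\le 2u$, so $2\in\Lambda$; hence $\Lambda\neq\emptyset$ and $\lambda_0\le 2$. Conversely, if $\lambda\in\Lambda$ with witness $g$, evaluating $g+\lambda u\ge e+f$ at an arbitrary $\omega$ and using $g(\omega)\le e(\omega)$ gives $\lambda\ge (e+f)(\omega)-g(\omega)\ge f(\omega)\ge 0$, so $\Lambda$ is bounded below and $\lambda_0\ge 0$. I would also note that $\Lambda$ is upward closed, since a witness for $\lambda$ is a witness for every $\lambda'\ge\lambda$; combined with $\lambda_0=\inf\Lambda$ this gives $\lambda_n:=\lambda_0+1/n\in\Lambda$ for every $n\in\mathbb{N}$.

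Next, fix for each $n$ a witness $g_n$ for $\lambda_n$. From $g_n\le e\le u$ and $g_n\ge e+f-\lambda_n u\ge-\lambda_n u\ge-3u$ (using $\lambda_n\le\lambda_0+1\le 3$) we obtain $\|g_n\|_\infty\le 3$, so every $g_n$ lies in the weak*-compact set $\{h\in A^b(\Omega):\|h\|_\infty\le 3\}$, a positive scalar multiple of the unit ball appearing in Remark~\ref{compactness}. Regarding $(g_n)_{n\in\mathbb{N}}$ as a net, it then admits a subnet $(g_{n_\alpha})$ converging in the weak*-topology~(\ref{weak1}) to some $g\in A^b(\Omega)$. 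Since weak*-convergence here is pointwise convergence on $\Omega$, we have $g_{n_\alpha}(\omega)\to g(\omega)$ for every $\omega\in\Omega$, while $\lambda_{n_\alpha}\to\lambda_0$ because the sequence $(\lambda_n)$ converges to $\lambda_0$.

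It then remains to pass the defining inequalities to the limit, pointwise in $\omega$: for each $\omega\in\Omega$, $g_{n_\alpha}(\omega)\le e(\omega)$ and $g_{n_\alpha}(\omega)\le f(\omega)$ give $g(\omega)\le e(\omega)$ and $g(\omega)\le f(\omega)$, and $g_{n_\alpha}(\omega)+\lambda_{n_\alpha}\ge (e+f)(\omega)$ gives $g(\omega)+\lambda_0\ge (e+f)(\omega)$. As $\omega$ is arbitrary, $g$ is a witness for $\lambda_0$, i.e.\ $\lambda_0\in\Lambda$, as claimed. The only delicate step is the compactness argument: because $V(\Omega)$ is not assumed separable, the relevant weak*-compact ball need not be metrizable, so one must argue with subnets rather than with a subsequence; everything else reduces to the elementary fact that non-strict inequalities are preserved under pointwise limits. (If the definition of $\Lambda$ is to be read as requiring the witness $g$ to be an effect, hence also $g\ge 0$, the same pointwise limit yields $g\ge 0$, so neither the statement nor the argument changes.)
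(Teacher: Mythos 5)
Your proof is correct and follows essentially the same route as the paper's: both rest on the weak*-compactness of a norm ball in $A^b(\Omega)$ (Remark 2.6), the paper via the finite-intersection property of the nested weak*-closed sets $\mathcal{A}_n=\{g\,:\,g\le e,f,\ g+\lambda_n u\ge e+f\}$, you via extraction of a weak*-convergent subnet of witnesses --- two equivalent faces of the same compactness argument, with your pointwise limit-passing step playing the role of the paper's closedness verification. If anything, you are slightly more careful on the norm bound for witnesses not assumed to be positive, a case the paper sidesteps by taking the witnesses in $\mathcal{E}(\Omega)$.
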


\begin{proof}
By the definition of greatest lower bound, there exists a decreasing sequence $\{\lambda_n\}_{n\in\mathbb{N}}\subset\Lambda$ such that $\lambda_n\to\lambda_0$.

\noindent
For each $n\in\mathbb{N}$, let us define the set $\mathcal{A}_{n}=\{g\in\mathcal{E}(\Omega)\,\vert\, g+\lambda_n u\geq e+f\,;\,g\leq e,f\}$ where $\mathcal{E}(\Omega)$ is the set of effects. Now, we divide the proof into 4 steps.

\noindent
1) \textbf{We prove that} $\mathcal{A}_{n+1}\subset\mathcal{A}_{n}$. 

\noindent
Suppose $g\in\mathcal{A}_{n+1}$. Since $\lambda_n\geq\lambda_{n+1}$ we have,
\begin{equation*}
g+\lambda_{n}u\geq g+\lambda_{n+1}u\geq e+f.
\end{equation*}
\noindent
hence, $g\in\mathcal{A}_{n}$.

\noindent
2) \textbf{We prove that}, for each $n\in\mathbb{N}$, $\mathcal{A}_{n}$ is closed with respect to the weak*- topology. 

\noindent
Let $g$ be an accumulation point for $\mathcal{A}_{n}$. Then, there is a net $\{g_{t}\}\subset\mathcal{A}_{n}$ which converges to $g$, $g_{t}\to g$. Moreover, for each $t$, 
\begin{equation*}
g_t+\lambda_n u\geq e+f.
\end{equation*}
Therefore,
$$g+\lambda_n u\geq e+f$$
\noindent
which implies $g\in\mathcal{A}_{n}$.

\noindent
3) \textbf{We prove that} $\cap_{n=1}^{\infty}\mathcal{A}_{n}\neq\emptyset$.

\noindent
Recall that, for each  $n\in\mathbb{N}$, the set $\mathcal{A}_{n}$ is contained in the unit ball $(A^b(\Omega))_1$ which is compact in the weak*-topology (\ref{weak1}). (See Remark \ref{compactness}). 

\noindent
Moreover the family of sets $\{\mathcal{A}_{n}\}_{n\in\mathbb{N}}$ has the finite intersection property, i.e., each finite subfamily has a nonempty intersection. Therefore (see theorem 1, page 136, in Ref. \cite{Kelly}), by the compactness of $(A^b(\Omega))_1$, the intersection $\cap_{n=1}^{\infty}\mathcal{A}_{n}$ must be nonempty.

\noindent
4) \textbf{We get the thesis}.

\noindent
Since $\cap_{n=1}^{\infty}\,\mathcal{A}_{n}\neq\emptyset$, there is an effect $g$ such that $g\in\mathcal{A}_{n}$ for each $n\in\mathbb{N}$. That implies,
$$g+\lambda_n u\geq e+f,\quad\forall n\in\mathbb{N}.$$
\noindent
Therefore,
$$g+\lambda_0 u\geq e+f.$$   
\end{proof}
\noindent
Theorem \ref{min} ensures that the greatest lower bound of $\{\lambda\in\mathbb{R}\,\vert\,\,\,\exists\, g\leq e,\,f\,;\,g+\lambda u \geq e+f\}$ is a minimum and implies the following characterization of the incompatibility of two effects.

\begin{corollary}\label{incompatibility}
Two effects $e$ and $f$ are incompatible (i.e., not jointly measurable) iff   $\lambda_0=\min\{\lambda\in\mathbb{R}\,\vert\,\,\,\exists\, g\leq e,\,f\,;\,g+\lambda u \geq e+f\}>1$.
\end{corollary}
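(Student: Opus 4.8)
The plan is to read off the corollary from Theorem~\ref{min} together with the joint-measurability condition (\ref{joint}); the only real work has already been done in Theorem~\ref{min}. Two observations do it. First, $\Lambda$ is an up-set: if $\lambda\in\Lambda$, witnessed by an effect $g$ with $g\le e$, $g\le f$ and $g+\lambda u\ge e+f$, and if $\mu\ge\lambda$, then $(\mu-\lambda)u\ge 0$ since $u$ is a positive element, so $g+\mu u = g+\lambda u+(\mu-\lambda)u\ge e+f$, and the same $g$ witnesses $\mu\in\Lambda$. Combined with $\lambda_0=\min\Lambda$ from Theorem~\ref{min}, this gives $\Lambda=[\lambda_0,\infty)$. (That $\lambda_0$ is a well-defined real number is already implicit in Theorem~\ref{min}; for completeness one notes $2\in\Lambda$ by taking $g=0$ and using $e+f\le 2u$, and that $\lambda\ge 0$ whenever $\lambda\in\Lambda$, obtained by evaluating $g+\lambda u\ge e+f$ at any state and using $g\le e$, $g\le f$.)

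Second, $1\in\Lambda$ if and only if $e$ and $f$ are jointly measurable. Indeed, by (\ref{joint}) joint measurability of $e$ and $f$ means exactly that there is an effect $g$ with $g\le e$, $g\le f$ and $e+f\le g+u$; writing $u=1\cdot u$, the inequality $e+f\le g+u$ is precisely the defining inequality $g+\lambda u\ge e+f$ of $\Lambda$ evaluated at $\lambda=1$. Hence joint measurability of $e$ and $f$ is equivalent to $1\in\Lambda$, which by $\Lambda=[\lambda_0,\infty)$ is equivalent to $\lambda_0\le 1$. Negating, $e$ and $f$ are incompatible if and only if $\lambda_0>1$, which is the claim.

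I do not expect a genuine obstacle: the substance is Theorem~\ref{min} (the nontrivial fact that the infimum is attained, so that $\Lambda$ is a closed half-line rather than an open one), and what remains is bookkeeping around it. The single point that needs care is the convention that the witnessing functional $g$ in the definition of $\Lambda$ is required to be an effect (as in the proof of Theorem~\ref{min}, where $\mathcal{A}_n\subset\mathcal{E}(\Omega)$), because it is only for effects $g$ that the three inequalities $g\le e$, $g\le f$, $e+f\le g+u$ coincide with the joint-measurability criterion (\ref{joint}); I would state this convention explicitly before invoking it.
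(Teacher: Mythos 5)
Your proposal is correct and follows exactly the route the paper intends: the paper leaves the corollary's proof implicit, but its justification is precisely your two observations, namely that $\Lambda$ is an up-set whose infimum is attained by Theorem~\ref{min} (so $\Lambda=[\lambda_0,\infty)$) and that $1\in\Lambda$ is, via condition~(\ref{joint}), equivalent to joint measurability. Your explicit flagging of the convention that the witness $g$ must be an effect is consistent with the paper's use of $\mathcal{A}_n\subset\mathcal{E}(\Omega)$ in the proof of Theorem~\ref{min}.
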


Note that theorem \ref{min} is essential for the corollary since otherwise $\lambda_0=1$ would not imply the compatibility of $e$ and $f$. Note moreover that $\lambda_0\leq2$ since $g=0$ and $\lambda=2$ satisfy the conditions in the corollary.

Given a pair of incompatible effects $e$ and $f$, the parameter $\sigma_0 :=2(1-\lambda_0^{-1}) \in [0,1]$ can be interpreted as a measure of the least noise, in the sense of a particular smearing, that must be added to $e$ and $f$ to make them compatible.   Denoting $e' = u-e$, an observable $F=\{e,\,e'\}$ can be smeared by means of a Markov kernel $ \{\mu_{11},\mu_{12},\mu_{21},\mu_{22}\,|\, \mu_{11}+\mu_{21}=\mu_{12}+\mu_{22}=1\}$, with $\{\mu_{11}e+\mu_{1
2}e',\,\mu_{21}e+\mu_{22}e'\}$---the smearing of $F$---interpreted as a noisy version of $F$. The Markov kernel provides two probability measures, $\{\mu_{1,1},\mu_{2,1}\}$ and $\{\mu_{2,1},\mu_{2,2}\}$ which describe the noise added to the observable $F$. The smearing $\tilde{e} = k^{-1}e$ and $\tilde{f} = k^{-1}f$; $k>0$ corresponds to the Markov kernel
\begin{equation*}\label{MK1}
    \{\mu_{11}=\frac{1}{k},\,\mu_{12}=0,\,\mu_{21}=1-\frac{1}{k},\,\mu_{22}=1\}.
\end{equation*}
In this interpretation, $k=1$ means no noise while $k\to\infty$ corresponds to the maximum noise. Indeed, $\tilde{e}\to0$ and $\tilde{e}'\to u$ as $k\to\infty$.  

Fixing $\lambda_0 > 1$ as above, we see that the effects $\tilde{e}_0 := e/\lambda_0$ and $\tilde{f}_0 := f/\lambda_0$ are always compatible. In order to show that $\sigma_0$ corresponds to the least noise to be added to $e$ and $f$ to make them compatible, suppose there is $1<k<\lambda_0$ such that $\tilde{e} = k^{-1}e$ and $\tilde{f} = k^{-1}f$ are compatible. Then, 
$$\lambda_0'=\min\{\lambda\in\mathbb{R}\,\vert\,\,\,\exists\, g\leq \tilde{e},\,\tilde{f}\,;\,g+\lambda u \geq \tilde{e} + \tilde{f}\}\leq 1.$$

\noindent
Moreover,

$$\lambda_0'=\min\{\lambda\in\mathbb{R}\,\vert\,\,\,\exists\, kg\leq e,\,f\,;\,kg+k\lambda u \geq e + f\}\leq 1.$$
\noindent
Hence, there is an effect $g$ such that 

$$ kg\leq e,\,f\,\quad\,kg+\lambda_0'k u \geq e + f,\quad \lambda_0'\leq 1.$$

Then there is an effect $g'=kg$ such that 

$$ g'\leq e,\,f\,\quad\,g'+\lambda_0'k u \geq e + f,\quad \lambda_0'\leq 1,\,\,k<\lambda_0.$$

But this contradicts the fact that $\lambda_0=\min\{\lambda\in\mathbb{R}\,\vert\,\,\,\exists\, g\leq e,\,f\,;\,g+\lambda u \geq e + f\}$ since $\lambda_0'k<\lambda_0$, justifying the conclusion that $\sigma_0$ is the minimum noise required (with respect to the given Markov kernel) to make any pair $e$ and $f$ compatible. This smearing differs from that given in  \cite{stevens2014steering, plavala2016all, jenvcova2018incompatible}, in both explicit form and in a more qualitative sense: the smearing given above (and also in \cite{Wolf}) arises from a stochastic Markov kernel, whereas the one appearing in \cite{stevens2014steering, plavala2016all, jenvcova2018incompatible} is doubly stochastic.



\section{Conclusions and outlook}
We have given a necessary and sufficient condition for the compatibility of a pair of effects in a general probabilistic model in which the state space is a total convex space. This constitutes a step towards establishing the full analogue of the result of \cite{Wolf} in this general setting. Our final observation points to connections between (in)compatibility and degree of smearing for Markov kernels more general than those appearing in the literature thus far, and will be the subject of further enquiry.

\section*{Acknowledgment}
\noindent
 RB: The present work has been realized in the framework of the activities of the INDAM (Istituto Nazionale di Alta Matematica). \


\bibliographystyle{apsrev4-2}

\bibliography{bib.bib}

\end{document}